\documentclass[12pt,titlepage]{article}

\usepackage{amsmath,amssymb,amsthm}
\usepackage{booktabs}
\usepackage{xcolor}
\usepackage{natbib}
\usepackage{hyperref}
\usepackage[capitalize]{cleveref}
\usepackage{setspace}

\newtheorem{lemma}{Lemma}

\newtheorem{definition}{Definition}

\DeclareMathOperator*{\argmax}{arg\,max}

\begin{document}

\centerline{\Large{Revealed Rationality:}}
\centerline{\large{Label-Free Evaluation and Regularization from Representation Theorems}}

\bigskip

\centerline{By Isaiah Andrews\footnote{First version: February 21, 2026.  This version: August 5, 2026. Department of Economics, Massachusetts Institute of Technology, and NBER, iandrews@mit.edu.  I thank Jiafeng Chen, Lukasz Kowalik, and Sendhil Mullainathan for useful discussions, and Claude and GPT for excellent research assistance.}}

\bigskip

\centerline{Abstract}
\begin{spacing}{1.25}
{\noindent\small{
Representation theorems in decision theory establish that behavior satisfies certain
axioms if and only if it can be rationalized by a well-defined objective. I argue that this ``if and
only if'' structure provides a potentially useful foundation for label-free evaluation and regularization of LLMs
 and other AI systems. Axiom compliance can be checked from the model's own
responses to synthetic choice problems, with no external labels or human feedback, and
the penalties are readily computable. 
Because the axioms are necessary and sufficient, the resulting checks exhaust the implications of the relevant rationality standard for the elicited data: a model that passes cannot be rejected on rationality grounds by any further test of the same data.
I discuss three instantiations: probabilistic 
coherence via a theorem of de Finetti, preference rationality via Afriat's theorem, and 
subjective expected utility via a theorem of \citet{echenique2015savage}, each yielding a continuous
penalty that is zero whenever behavior can be rationalized. Since coherence
does not restrict which objective rationalizes behavior, these penalties complement rather than replace other evaluation and training signals.\\

\noindent Keywords: Rationality, Evaluation, Regularization, Large language models
}
}
\end{spacing}
\section{Introduction}\label{sec:intro}

Large language models are increasingly used in contexts that require them to
express beliefs, rank alternatives, and recommend actions.  Recent work
suggests both that the rationality of these outputs varies substantially
across models and that it is responsive to design choices.
\citet{tak2026sparks}, for instance, test whether large language models satisfy the von
Neumann-Morgenstern axioms (completeness, transitivity, continuity,
independence) and find widespread violations, but also find that models using
additional reasoning tokens show substantially improved axiom compliance.
\citet{chadwick2025dutch} observe that incoherent probability assessments 
and intransitive preference orderings in
large language models are directly exploitable by constructing Dutch books (combinations of
bets guaranteed to extract money from the model) and money pumps (sequences of trades
which extract value) respectively.\footnote{Exploitability arguments of this kind have a long history as normative foundations for rationality axioms; see \citet{gustafsson2022} for a recent treatment.}
Together, these findings illustrate both that rationality violations can be
consequential and that they are amenable to intervention.

A broader diagnostic literature reinforces these observations.
\citet{chen2023pnas} test whether GPT-3.5 choices from budget sets satisfy
axioms from revealed preference theory, finding high but imperfect
compliance.  
\citet{hagendorff2023llm} show that earlier generations of LLMs depart from 
rational choice, but that these violations are smaller for later model generations.  
\citet{raman2024steer} develop a benchmark organized around a taxonomy of rationality elements, and construct ``report cards'' summarizing performance across models.
\citet{mazeika2025utility} find that larger models largely
satisfy expected utility axioms, but sometimes exhibit unappealing emergent goals.
\citet{qiu2026bayesian} find that LLMs fail to correctly apply Bayes' rule, 
but that performance is dramatically improved by training on the predictions of a Bayesian model.  This literature
reinforces that rationality violations are present and, in some cases,
responsive to training, but also that coherence alone does not settle the
question: a model can satisfy axioms of rationality while pursuing objectives that are
misaligned.

Most of this work is diagnostic, documenting the presence or absence of
particular rationality properties.  
I argue that classic representation theorems in decision theory
provide a natural foundation for more systematic analysis.
A representation
theorem is a result of
the form: behavior satisfies axioms $A_1, \ldots, A_k$ if and only
if that behavior can be rationalized by some objective, e.g. a probability measure,
a utility function, or both.\footnote{The term ``representation'' here refers to the
rationalization of behavior by a mathematical objective, not to the feature
representations or embeddings discussed in machine learning.}  One direction says that any
``rational'' agent (in the sense specified by the theorem) satisfies the axioms, which is expected and consistent with the diagnostic literature discussed above.  
The theorems also cover the converse, however, establishing
 that if a model's behavior satisfies the
axioms, then there exists a well-defined objective that rationalizes
it.

Representation theorems are potentially powerful aids for model evaluation and training.
By construction, the axioms completely characterize the implications of a given
rationality notion for observable behavior.  Hence, if a model satisfies all the
axioms, it follows that its behavior is consistent with a given notion of rationality.
Moreover, if axiom violations can be driven to zero, this ensures that the model's choices
are consistent with the desired rationality notion without specifying what prior or utility function the model
should have.

Classical axiomatizations, such as \citet{savage1954}'s,
characterize behavior on a rich choice domain,
and in particular contemplate an agent's preferences in every conceivable
comparison.  Indeed some axioms, such as continuity conditions, have no content for
 finite collections of options.  
 For model evaluation and training, it is thus
particularly useful to consider results in the revealed preference tradition,
which characterize the implications of rationality axioms for finite datasets \citep[see][for a review]{chambers2016revealed}.
Using these results, one can check axiom compliance using only the model's own responses to finite collections of queries.  
One
could, for instance, generate synthetic choice problems (betting scenarios, budget sets, portfolio
allocations), query the model, and check whether the responses satisfy the
relevant axioms.  No external labels, ground truth outcomes, or human feedback
are required.  The combinatorial explosion that limits axiom-based testing for
human subjects may be a feature in the LLM context, where synthetic problems and responses can be
generated cheaply and in large numbers.  Moreover, as I discuss below many revealed preference
results come with continuous violation measures that can be computed in polynomial time.

It is important to emphasize that coherence is not sufficient for good behavior.  
A model satisfying all the axioms
of subjective expected utility, for instance, behaves in accordance with some 
well-defined prior and
utility function but both could be terrible, so other evaluation criteria and training signals
remain essential.

I discuss three instantiations of this idea, in order of increasing
richness.  \Cref{sec:definetti} considers probabilistic coherence through de
Finetti's theorem: a model's probability assessments are consistent with a
well-defined probability measure if and only if they cannot be Dutch-booked.
\Cref{sec:afriat} considers preference rationality through Afriat's theorem:
a model's choices from budget sets are consistent with utility maximization if
and only if they satisfy the Generalized Axiom of Revealed Preference.
\Cref{sec:seu} considers the joint structure of beliefs and preferences under
uncertainty through results of \citet{echenique2015savage}: a model's
portfolio choices are consistent with subjective expected utility maximization (with a concave utility)
if and only if they satisfy the Strong Axiom of Revealed Subjective Expected
Utility.  \Cref{sec:implementation} briefly discusses implementation.
\Cref{sec:related} situates the approach relative to RLHF, calibration, and
other suggestions from the literature.  \Cref{sec:conclusion} discusses limitations and concludes.

\section{Probabilistic Coherence: de Finetti}\label{sec:definetti}

The simplest instance of the general principle concerns probabilistic beliefs.
De Finetti's coherence theorem provides an ``if and only if''
characterization of when a set of probability assessments is consistent with a
well-defined probability measure.

\paragraph{Setup}

Consider a finite collection of events $E_1, \ldots, E_n$ defined on a state space $\Omega$.  A
\emph{prevision assignment} (i.e., a set of probabilistic predictions) is a
function $p$ that assigns a value $p(E_i) \in [0,1]$ to each event.  In our
context, these are the model's stated probability assessments when queried
about each event.

\paragraph{Avoiding sure loss}

A bet on event $E_i$ with stake $b_i \in \mathbb{R}$ yields a payoff of
$b_i(\mathbf{1}_{E_i}(\omega) - p(E_i))$ to the bettor when the state is $\omega$, 
where $\mathbf{1}_{E_i}(\omega)$ takes the value one when $\omega\in E_i$ and zero otherwise.\footnote{The convention is
that the AI agent takes the opposite side of the bet, so the agent's payoff from
this bet is $-b_i(\mathbf{1}_{E_i}(\omega) - p(E_i))$.}  A \emph{Dutch book} against
the prevision assignment $p$ is a collection of stakes $(b_1, \ldots, b_n)$
such that the agent's net payoff $-\sum_{i=1}^n
b_i(\mathbf{1}_{E_i}(\omega) - p(E_i))$ is strictly negative in every state
of the world $\omega$.  The existence of a Dutch book means that the agent's
stated probabilities are exploitable: an adversary can construct a combination
of bets that guarantees a profit regardless of which events occur.
The prevision assignment $p$ \emph{avoids sure loss} if no Dutch book exists
against it.

\paragraph{The representation theorem}

De Finetti's coherence theorem  \citep{definetti1937, definetti1974} states that the following are equivalent:
\begin{enumerate}
    \item The prevision assignment $p$ avoids sure loss.
    \item The prevision assignment $p$ can be extended to a finitely additive
    probability measure on the algebra generated by $E_1, \ldots, E_n$.
\end{enumerate}
Condition (1) depends only on the agent's stated probabilities
and can be checked without knowing which events actually occur.  Condition
(2) says that the probabilities can be rationalized by a coherent probability
measure.  The equivalence between (1) and (2) is the representation theorem.
Note that finitely additive probability suffices since we consider only a finite
collection of events.

\paragraph{The penalty}

The magnitude of the best Dutch book against a prevision assignment $p$
provides a natural, continuous penalty.  Let $\omega_1, \ldots, \omega_m$
denote representative states, one for each atom of the algebra generated by $E_1, \ldots, E_n$.  The
bettor's profit in state $\omega_j$ from a collection of stakes $(b_1,
\ldots, b_n)$ is
\[
\Pi(\omega_j) = \sum_{i=1}^n b_i \bigl(\mathbf{1}_{E_i}(\omega_j) -
p(E_i)\bigr).
\]
The sure-loss magnitude is the optimal value of the linear program
\begin{equation}\label{eq:dutchbook}
L(p) = \max_{b \in \mathbb{R}^n} \min_{j=1,\ldots,m} \;
\Pi(\omega_j),
\end{equation}
subject to $\sum_i |b_i| \leq 1$.  The normalization bounds the total size of
all bets, so $L(p)$ measures the guaranteed profit per unit of total stake.
By the representation theorem, $L(p) = 0$ if and only if $p$ is coherent.
When $L(p) > 0$, its value quantifies the exploitability of the model's stated
probabilities.  This is a linear program and is solvable in polynomial
time (in the number of atoms and events).\footnote{\citet{garrabrant2016logical} use a related no-arbitrage
criterion, requiring that no computable trading strategy earn unbounded
profits, as a coherence condition for a logical probability assigner.  Their
setting (logical statements) differs from the events considered here,
but the underlying principle is the same: the absence of profitable
exploitation characterizes coherence.}

\paragraph{Translation to the LLM context}

To apply this, one may generate a partition of a sample space into
atoms,
along with a collection of events defined as unions of those atoms (e.g. that it rains tomorrow in Boston, but not in New York,
and there is at least one hour of sun during the day in Boston, and ...).
The model is presented with descriptions of the events and asked to report a
probability for each.  The logical relationships among events (set inclusion,
partitioning, complementation) are known by construction, so the full LP can
be assembled from the model's responses without any external labels.
The resulting LP then checks all implications of the probability axioms.

An important design consideration is that the axiom checks test consistency
across a collection of assessments attributed to a single agent.  In the LLM
context, it is natural to fix a role for each batch of queries
(e.g., ``you are a forecaster assessing weather probabilities'') and to
test coherence within that role, rather than across different roles, since
it is not clear that probability assessments should be consistent across
distinct roles.

\paragraph{Existing evidence}

\citet{zhu2024incoherent} directly measure probability coherence
violations in LLM outputs across several model families, finding systematic
incoherence in probability judgments.
\citet{paleka2025consistency} construct consistency checks linking logically
related forecasts (negations, conjunctions, conditional probabilities), and
measure violations by an arbitrage metric: the worst-case performance improvement
which could be guaranteed by an adversary who aims to improve the model's forecasts.
They document substantial violations even for models
using extended reasoning, and find that consistency predicts ground-truth
forecast accuracy.  See also the applied forecasting system of
\citet{alur2025aia}, which reconciles disparate forecasts of the same event
using a supervisor agent at inference time.
\citet{betz2023neural} demonstrate that self-training improves the
probabilistic coherence of neural language models trained on synthetic data.
\citet{qiu2026bayesian} train models for Bayesian probabilistic reasoning via
supervised fine-tuning on demonstrations from an external model, and show that this 
generates performance gains which generalize to new tasks.
\citet{kim2026drift} document violations of a martingale property of LLM
predictive beliefs under resampling, and reduce them by fine-tuning on a
self-consistency loss.  Similarly, \citet{chandak2025forecasting}
find that reinforcement learning on forecasting outcomes reduces arbitrage
violations on the benchmark of \citet{paleka2025consistency} even though
consistency is not targeted directly, and \citet{lee2025advancing} propose
consistency-based auxiliary rewards for forecaster training.
These results demonstrate that probabilistic coherence can be improved by
training; the present approach differs in using a continuous penalty grounded
in a representation theorem rather than supervised examples or checks for a
subset of coherence properties.

\section{Preference Rationality: Afriat}\label{sec:afriat}

The second instantiation concerns preferences over bundles of goods.  Afriat's theorem
provides a revealed preference test for utility maximization that is
analogous to de Finetti's theorem for probabilistic coherence: behavior is
rationalizable if and only if it satisfies an axiom stated on the observed
choices.

\paragraph{Setup}

The data consist of $T$ observations $\{(x_t,
p_t)\}_{t=1}^T$, where $x_t \in \mathbb{R}^K_+$ is the bundle chosen at prices 
$p_t \in\mathbb{R}^K_{++}$ and income $w_t = p_t \cdot x_t$.

\paragraph{GARP}

The choices $\{(x_t,p_t)\}_{t=1}^T$ reveal information about
preferences.  If $x_t$ was chosen when $x_s$ was affordable
($p_t \cdot x_s \leq p_t \cdot x_t$), then $x_t$ is \emph{directly revealed
preferred} to $x_s$, written $x_t \mathrel{R^D} x_s$.  The \emph{revealed
preference} relation $R$ is the transitive closure of $R^D$ - for instance, if $x_t R^D x_s$ and $x_s R^D x_r$ then $x_t R x_r$.

The \emph{Generalized Axiom of Revealed Preference} (GARP) requires: if $x_t
R x_s$, then $p_s \cdot x_t \geq w_s$.  That is, if $x_t$ is
revealed preferred to $x_s$ (directly or through a chain), then $x_t$ must
not have been strictly inside the budget set at which $x_s$ was chosen.

\paragraph{The representation theorem}

Afriat's theorem \citep{afriat1967, varian1982} states that the following are
equivalent:
\begin{enumerate}
    \item The data $\{(x_t,p_t)\}_{t=1}^T$ satisfy GARP.
    \item There exist numbers $\{u_t, \lambda_t\}_{t=1}^T$ with $\lambda_t >
    0$ satisfying the \emph{Afriat inequalities}: \\$u_s - u_t \leq \lambda_t
    p_t \cdot (x_s - x_t)$ for all $s, t$.
    \item There exists a continuous, monotone, concave utility $U :
    \mathbb{R}^K_+ \to \mathbb{R}$ such that 
    \[
    x_t \in \argmax_{x : p_t
    \cdot x \leq w_t} ~ U(x) \text{ for all }t.\]
\end{enumerate}
The equivalence between (1) and (3) is the representation theorem: choices are
consistent with utility maximization if and only if they satisfy GARP.  If any
locally nonsatiated (i.e., for every bundle there is another nearby that is strictly preferred)
utility function rationalizes the data, a continuous, monotone, concave one
does as well \citep{afriat1967}.  The Afriat inequalities in (2) are a
finite system of linear inequalities providing the constructive check 
for whether a finite dataset is consistent with GARP.
Note that the rationalizing utility is not unique: many different utility
functions may be consistent with the same finite dataset.

\paragraph{The penalty: 1-CCEI}

GARP is a binary condition: it either holds or it fails.  For evaluation
and regularization, a continuous measure of departure is potentially helpful.  The
\emph{Critical Cost Efficiency Index} (CCEI), introduced by
\citet{afriat1973}, provides one such measure.  Define \emph{$e$-GARP}
as the relaxation in which budget constraints are tightened by a factor $e
\in [0,1]$: $x_t$ is directly revealed preferred to $x_s$ under $e$-GARP
only if $p_t \cdot x_s \leq e \cdot p_t \cdot x_t$.  The CCEI is the
largest $e$ such that the data satisfy $e$-GARP:
\[ \text{CCEI} = \sup\{e \in [0,1] : \{( x_t,p_t)\}_{t=1}^T \text{ satisfies } e\text{-GARP}\}. \]
If the data satisfy GARP then $\text{CCEI} = 1$. 
CCEI is continuous in the model's choices, and is computable
by binary search over $e$, with each step requiring a GARP check.
Unfortunately, however, when  a revealed preference comparison holds with exact
budget equality, the CCEI may equal one even though GARP fails.

\citet{echenique2021} raises other
substantive concerns about common interpretations of the CCEI.  
Alternative measures that avoid these concerns are the \emph{money pump indices} of
\citet{echenique2011mpi} and \cite{smeulders2013}.  These indices measure how much money an adversary could in principle extract using a model's
 irrational preferences, and are positive exactly when GARP fails and therefore resolve one natural shortcoming of the CCEI. 
 At the same time, however, they can change discontinuously in the agent’s chosen bundles $x_t,$ which seems potentially unappealing for our purposes.
  
  While the interpretive limitations of the CCEI raised by \citet{echenique2021} are inherent to
  the measure, the fact that the CCEI may equal one even when GARP fails can be addressed by a suitable elicitation strategy.

\begin{lemma}\label{lem:generic}
Suppose that the price vectors $p_t$ are drawn independently across
observations from distributions with densities with respect to
Lebesgue measure, that each choice $x_t$
depends only on observation $t$'s prices $p_t$ and income $w_t$, with income
fixed or drawn independently of the prices, and that choices
exhaust their budgets, $p_t \cdot x_t = w_t > 0$ for all $t$.
Then with probability one $\text{CCEI} = 1$ if and only if the
data satisfy GARP.
\end{lemma}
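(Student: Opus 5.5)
The plan is to prove the two directions separately. The direction ``GARP $\Rightarrow$ CCEI $=1$'' holds deterministically. The converse needs the genericity assumptions, and only to rule out exact budget equalities between distinct observations.

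Throughout I read $e$-GARP in the standard way. Let $R_e$ be the transitive closure of the tightened direct relation $x_t \mathrel{R^D_e} x_s \iff p_t\cdot x_s \le e\, w_t$. The axiom then requires $p_s\cdot x_t \ge e\, w_s$ whenever $x_t \mathrel{R_e} x_s$.

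\emph{Step 1 (GARP implies CCEI $=1$).} For $e\le 1$ we have $R^D_e\subseteq R^D$, hence $R_e\subseteq R$. So if $x_t \mathrel{R_e} x_s$, GARP gives $p_s\cdot x_t\ge w_s\ge e\,w_s$, using $w_s>0$. Thus $e$-GARP holds for every $e\in[0,1]$, and CCEI $=1$.

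\emph{Step 2 (a null event).} Let $N$ be the event that $p_t\cdot x_s = w_t$ for some ordered pair $s\neq t$. I claim $\Pr(N)=0$. Fix $s\ne t$. Under the hypotheses, $x_s$ is a function of $(p_s,w_s)$, and $(x_s,w_t)$ is independent of $p_t$: prices are independent across observations, and income is fixed or independent of prices. Since $p_s\cdot x_s = w_s>0$, we have $x_s\neq 0$. Conditional on $(x_s,w_t)$, the set $\{p\in\mathbb{R}^K: p\cdot x_s = w_t\}$ is therefore an affine hyperplane, which has Lebesgue measure zero. Because $p_t$ has a density, the conditional probability is zero, and integrating gives probability zero. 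A union over the finitely many pairs gives $\Pr(N)=0$. The delicate point is that $x_t$ itself depends on $p_t$. I avoid this by using budget exhaustion, $p_t\cdot x_t=w_t$, to rewrite the equality $p_t\cdot x_s=p_t\cdot x_t$ so that it involves only $p_t$ and quantities independent of it. This step also handles repeated bundles $x_s=x_t$, which would force $p_t\cdot x_s=w_t$.

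\emph{Step 3 (off $N$, GARP failure implies CCEI $<1$).} Suppose GARP fails. Then there is a chain $x_{t_1}\mathrel{R^D}x_{t_2}\mathrel{R^D}\cdots\mathrel{R^D}x_{t_k}$ with $p_{t_k}\cdot x_{t_1}<w_{t_k}$. We may take consecutive indices distinct, since repetitions can be deleted. Off $N$, every direct relation in the chain is strict: $p_{t_j}\cdot x_{t_{j+1}}<w_{t_j}$. Let
\[
\bar e=\max\Bigl\{\max_j \frac{p_{t_j}\cdot x_{t_{j+1}}}{w_{t_j}},\ \frac{p_{t_k}\cdot x_{t_1}}{w_{t_k}}\Bigr\}<1 .
\]
For every $e\in(\bar e,1]$, the chain survives under $R^D_e$, so $x_{t_1}\mathrel{R_e}x_{t_k}$. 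At the same time $p_{t_k}\cdot x_{t_1}<e\,w_{t_k}$, so $e$-GARP fails. Hence CCEI $\le\bar e<1$.

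Combining Steps 1–3 gives the claimed equivalence on the complement of $N$, that is, with probability one. I expect Step 2 to be the main obstacle. It requires carefully stating the independence structure so that $(x_s,w_t)$ is independent of $p_t$, and noticing that budget exhaustion removes the dependence of $x_t$ on $p_t$.
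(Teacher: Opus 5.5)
Your proposal is correct and follows essentially the same route as the paper. It uses the same null-event argument, with budget exhaustion used to replace $p_t\cdot x_t$ by $w_t$, and then shows that on the no-ties event every comparison in a GARP-violating chain is strict, so the violation persists for all $e$ above the largest ratio $\bar e<1$. The only cosmetic difference is that you work with a chain plus a strict closing inequality rather than the paper's closed cycle of distinct observations, and the two formulations are equivalent.
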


The proof is in the appendix.

\paragraph{Translation to the LLM context}

To use this approach for model evaluation or training, one could present the model with
a role and a budget constraint (prices and income), and ask it to allocate across goods.
One could then vary prices and income, compute the CCEI of the
resulting choices, and penalize $1 - \text{CCEI}$.
 The decision context can be varied across batches: one might
involve grocery purchases, another retirement portfolio allocation, another
time allocation across tasks.  GARP requires consistency within each context,
and testing across many contexts strengthens the regularization.  As in the de
Finetti case, it seems natural to fix a role or persona within each batch and test
consistency within that role.

\paragraph{Existing evidence}

\citet{chen2023pnas} apply the GARP framework to test whether GPT models make
rational choices.  They present GPT-3.5 with budget allocation
tasks across four domains (risk, time, social preferences, and food) and
compute the CCEI.  The average CCEI values exceed 0.997 across all domains,
outperforming human subjects in comparable experimental
designs.  Importantly, however, these tests use
two-good settings with a relatively limited number of budget sets.
Thus, while 
\citet{chen2023pnas} demonstrate that their tests have sufficient power for their purposes, GARP-based
regularization may have more bite in richer settings, where it is much easier to gather data from LLMs than from
human subjects.  \citet{wen2025specialization} finds that prompt-based role
specialization (assigning domain-specific personas such as ``biotechnology
expert'' or ``economist'') substantially reduces GARP compliance.
\citet{seror2024moral} extend GARP testing to moral preference domains,
finding variation in choices across models.  \cite{aguiar2026garp} show that fine-tuning
a time-series foundation model on synthetic data satisfying GARP substantially 
improves performance for predicting consumer choices.

\section{Decision Under Uncertainty: Echenique-Saito}\label{sec:seu}

The preceding sections address beliefs and preferences separately.  However, a model
could have coherent probabilities and rational preferences but still fail to
integrate them.  The third
instantiation of the general principle addresses this by asking if a model's choices under
uncertainty are jointly rationalized by some prior belief and utility function.

\paragraph{Setup}

Let $S = \{1, \ldots, \bar{s}\}$ be a finite set of states.  A
\emph{monetary act} is a vector $x \in \mathbb{R}^S_+$ specifying a payoff $x(s)$ in
each state $s$.  The data consist of $T$ observations $\{(x_t,
p_t)\}_{t=1}^T$, where $x_t \in \mathbb{R}^S_+$ is the portfolio chosen at
prices $p_t \in \mathbb{R}^S_{++}$ for state-contingent claims (securities
that pay one unit in a designated state and zero otherwise) with income $w_t = p_t \cdot x_t$.\footnote{The restriction
to monetary payoffs (one good per state) is a limitation relative to the
Afriat setting, which handles bundles of multiple goods.  This is, however,
a common setting for considering decision under uncertainty, particularly in asset pricing.
In addition, the extension to state-dependent utility discussed below considerably generalizes the result.}

\paragraph{SARSEU}

\citet{echenique2015savage} introduce the following condition.

\begin{definition}[SARSEU]
A dataset $\{(x_t, p_t)\}_{t=1}^T$ satisfies the \emph{Strong Axiom of
Revealed Subjective Expected Utility} (SARSEU) if, for any sequence of pairs
$(x_{t_i}(s_i), x_{t'_i}(s'_i))_{i=1}^n$ such that
\begin{enumerate}
    \item[(i)] $x_{t_i}(s_i) > x_{t'_i}(s'_i)$ for all $i$,
    \item[(ii)] each state $s$ appears as $s_i$ (on the left) the same number
    of times as it appears as $s'_i$ (on the right),
    \item[(iii)] each observation $t$ appears as $t_i$ (on the left) the same
    number of times as it appears as $t'_i$ (on the right),
\end{enumerate}
we have
\begin{equation}\label{eq: cycles}
\prod_{i=1}^n \frac{p_{t_i}(s_i)}{p_{t'_i}(s'_i)} \leq 1.
\end{equation}
\end{definition}
See \cite{echenique2015savage} for
discussion of these conditions.

\paragraph{The representation theorem}

\citet{echenique2015savage} show the following are equivalent:
\begin{enumerate}
    \item The data $\{(x_t,p_t)\}_{t=1}^T$ satisfy SARSEU.
    \item There exist a full-support probability measure
$\mu \in \Delta_{++}(S)$ and a concave, strictly increasing function $u :
\mathbb{R}_+ \to \mathbb{R}$ such that
    \[
    x_t \in \argmax_{x : p_t
    \cdot x \leq w_t} \sum_{s \in S} \mu(s) u(x(s)) \text{ for all }t.\]\end{enumerate}

Condition (2) requires that the agent has a single probability distribution over states (beliefs) and a
single utility function over monetary payoffs (preferences), and chooses $x$ to
maximize expected utility. \citet{savage1954} provided a set of axioms for preferences over ``acts'' mapping states to
``consequences'' (potentially more general than simple monetary payoffs) and
proved that, under these axioms, preferences admit an SEU representation.
  Savage's framework, however, restricts the preference ordering over all acts, which is not what one observes in
finite data.  \citet{echenique2015savage} provide the operational analog for the problem with monetary payoffs: 
the SARSEU restriction in (1) is a
revealed-preference characterization of SEU that applies to finite datasets of
choices at prices, exactly as \citet{afriat1967} operationalizes utility
theory for finite budget-set data.

As in the de Finetti and Afriat cases, the condition is necessary and
sufficient: the axiom holds if and only if behavior is consistent with subjective expected utility maximization (with a concave utility and full-support belief).
\citet{echenique2015savage} show in the proof of their Proposition 2 that SARSEU can be tested
by checking feasibility of a linear program.

SARSEU is a binary condition, so, as in \cref{sec:afriat} a continuous
measure of departure may be helpful.  The construction behind the CCEI adapts
directly: weaken the condition being checked until the data
satisfy it.  For $e \in
[0,1]$, say that the data satisfy \emph{$e$-SARSEU} if \eqref{eq: cycles}
holds for every sequence of pairs satisfying conditions (ii) and (iii)
and, in place of condition (i), $e \cdot x_{t_i}(s_i) > x_{t'_i}(s'_i)$
for all $i$.  Shrinking $e$ removes comparisons, so $e$-SARSEU only
becomes easier to satisfy, and $0$-SARSEU holds vacuously.  Let
\[ \text{E} = \sup\{e \in [0,1] : \{(x_t, p_t)\}_{t=1}^T \text{ satisfies } e\text{-SARSEU}\}, \]
and take $1 - \text{E}$ as the penalty.

\begin{lemma}\label{lem:E}
For any dataset $\{(x_t, p_t)\}_{t=1}^T$:
\begin{enumerate}
\item[(i)] The supremum defining $E$ is attained, and $E = 1$ if and only
if the data satisfy SARSEU.
\item[(ii)] $E$ equals either $1$ or one of the finitely many ratios
$x_{t'}(s')/x_{t}(s) < 1$ formed from pairs of payoffs, and is computable
by binary search over these candidate values, with each step solving the
linear program of \citet{echenique2015savage} restricted to pairs with $e \cdot x_{t}(s) >
x_{t'}(s')$.
\item[(iii)] At fixed prices, $E$ is continuous in the payoffs
$\{x_t(s)\}$ on the region where $x_t(s)>0$ for all $t$ and $s$.
\end{enumerate}
\end{lemma}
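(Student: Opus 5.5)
The plan is to reduce everything to one structural observation: the payoffs enter $e$-SARSEU only through \emph{which} pairs are admissible comparisons, while the inequality \eqref{eq: cycles} itself depends only on prices. For an ordered pair of coordinates $((t,s),(t',s'))$ with $x_t(s)>0$, set $r = x_{t'}(s')/x_t(s)$. The comparison $e\cdot x_t(s) > x_{t'}(s')$ holds if and only if $e > r$. If $x_t(s)=0$, the comparison never holds for $e\ge 0$. Any pair with $r\ge 1$ is never admissible for $e\le 1$. So let $C(e)$ be the set of admissible pairs. Then $C(e)$ is nondecreasing in $e$, it is piecewise constant, it changes only at the finitely many ratios $r<1$ (including $r=0$), and it is constant on each half-open interval $(r_{(k)}, r_{(k+1)}]$ between consecutive candidate values.

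Call a sequence of pairs satisfying (ii) and (iii) \emph{price-violating} if its product in \eqref{eq: cycles} exceeds one; this property involves only prices and indices. Then $e$-SARSEU fails if and only if some price-violating sequence uses only pairs in $C(e)$, that is, if and only if $e > m(P) := \max_{\text{pairs in }P} r$ for the set $P$ of distinct pairs it uses. There are only finitely many such sets $P$, so
\[ E = \min\Bigl\{1,\ \min_{P \in \mathcal{P}} m(P)\Bigr\}, \]
where $\mathcal{P}$ is the finite family of pair-sets supporting some price-violating sequence.

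Part (i) follows from this formula. The feasible set $\{e : e\text{-SARSEU}\}$ equals $[0,\min_P m(P)]\cap[0,1]$, a closed interval, so the supremum is attained. Also $C(1)$ is exactly the set of comparisons in condition (i) of the definition, so $1$-SARSEU is SARSEU. Hence $E=1$ if and only if SARSEU holds.

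For part (ii), the formula shows directly that $E$ is either $1$ or one of the ratios $r<1$. Monotonicity of $C(e)$ then makes feasibility monotone along the sorted list of candidates, so binary search over the candidates is valid. The main obstacle is justifying that each step can be done by the linear program of \citet{echenique2015savage} restricted to the comparisons in $C(e)$. I would revisit their proof of Proposition 2. There, SARSEU with respect to a given set of strict comparisons is shown, via a theorem of the alternative (integer-valued solutions of a rational linear system, after taking logs of prices), to be equivalent to feasibility of a linear system. That system has one inequality per admissible comparison and one monotonicity/concavity constraint per pair. The duality argument never uses that the comparison set comes from (i) specifically; it only uses that the cycle conditions (ii) and (iii) are linear balance constraints on the multiplicities of the admissible pairs. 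I would therefore state and verify that the argument goes through verbatim for an arbitrary subset of pairs. Here I need to check carefully that the strict/weak inequality bookkeeping is unchanged when some strict comparisons are dropped.

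For part (iii), on the region where all $x_t(s)>0$, every ratio $r$ is a continuous function of the payoffs. Hence each $m(P)$, as a finite maximum of continuous functions, is continuous. Since $\mathcal{P}$ depends only on prices, which are fixed, $E$ is a finite minimum of continuous functions capped at $1$, and is therefore continuous.

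One point to check here is that the $\min\{1,\cdot\}$ truncation handles ratios $\ge 1$ correctly. Such a pair never lies in $C(e)$ for $e\le1$, while $m(P)\ge 1$ correctly yields $E=1$. So the formula remains valid when ratios cross $1$, which is exactly where continuity could otherwise fail.
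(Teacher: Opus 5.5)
Your argument is correct, but it is organized differently from the paper's.

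\textbf{The paper's route.} The paper encodes sequences as multiplicity vectors and forms the rational polytope $P_0$ of normalized balanced vectors. Using rationality of its vertices, it proves that $e$-SARSEU fails iff $\Phi(e)=\max\{\delta\cdot z: z\in P_0(e)\}>0$. It then writes $E=\min\{1,\min_{v\in V_+}\tau_v\}$, where $V_+$ is the set of vertices of $P_0$ with positive objective. The vertex set thus does double duty: it supplies the finite family for the min--max formula, and it gives the LP check.

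\textbf{Your route.} You index the minimum instead by the supports of price-violating sequences. Their finiteness is trivial, since they are subsets of the finite set of ordered cell pairs. So parts (i), (iii), and the candidate-value half of (ii) follow with no polyhedral or LP argument at all. This is more elementary, and it makes transparent that payoffs enter only through which supports are licensed. You do need the convention $r=+\infty$ when $x_t(s)=0$ for ``$e>m(P)$ iff $P\subseteq C(e)$'' to hold.

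\textbf{The cost.} The LP check in (ii) is no longer a by-product; you defer it to re-running the Echenique--Saito duality argument for an arbitrary comparison set. That deferred step is sound. Apply Farkas to the log-linearized system with one weak inequality $\log v_b\ge\log v_a$ per admissible pair $(a,b)$. Infeasibility yields a certificate $\eta\ge 0$ supported on $C(e)$ that satisfies the state and observation balance conditions with $\delta\cdot\eta>0$. Because the constraint cone is rational and $\delta\cdot\eta>0$ is an open condition, $\eta$ can be taken integral. Dropping comparisons merely deletes weak inequalities, so your strict/weak bookkeeping worry is moot. Still, as written this step is a promissory note resting on the internals of their proof.

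\textbf{What each buys.} The paper's rational-vertex argument is exactly the self-contained version of that step; its program $\Phi(e)$ is essentially the Farkas dual of the Echenique--Saito system. Your version is closer to the statement's literal wording of solving their program restricted to licensed pairs.
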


The proof is in the appendix.  Because the comparisons that trigger
SARSEU involve payoffs alone, with prices entering only through the
products in \eqref{eq: cycles}, the penalty $1 - \text{E}$ is continuous in the
model's choices at any fixed prices, on the region where
payoffs are strictly positive, while remaining zero exactly when SARSEU
holds.

An alternative measure from the literature is the minimal perturbation
index of \citet{echenique2023approximate}.  They define a dataset to be
perturbed SEU rational if it can be rationalized after the beliefs,
prices, or state-utility weights faced at different observations are
adjusted, with relative adjustments bounded by a common factor.  They
show that the three forms of perturbation are equivalent, and propose the minimal factor, written $1 +
e^*$, as a measure of the distance from SEU.  The index 
$e^*$ is positive exactly when SARSEU fails, but it depends on the
model's choices only through the ordering of the payoffs, and so is
unchanged by any perturbation of the choices that preserves that
ordering. This implies, however, that $e^*$ is piecewise constant,
which is a potential downside for present purposes.

\paragraph{Translation to the LLM context}

The elicitation procedure is similar to that of \cref{sec:afriat}.  Present the
model with a set of states of the world.  Offer state-contingent claims at given prices and budgets.  
The model allocates a budget across
states.  Multiple rounds with different price vectors yield the dataset
$\{(x_t, p_t)\}_{t=1}^T$, from which one computes E and penalizes
$1 - \text{E}$. 
 As in the preceding sections, the model should be
assigned a fixed role within each batch, since the approach tests consistency of choices
attributed to a single decision-maker.

\paragraph{Existing evidence}

Revealed preference tests of SEU rationality (SARSEU or related
conditions) on LLM-generated data appear not to exist.
\citet{mazeika2025utility} find that large models behave broadly in line with expected utility, 
but do not test the joint belief-preference structure that
SARSEU captures.  \citet{tak2026sparks} likewise test model adherence to the 
expected-utility axioms, and they further show that models exhibit ambiguity aversion in settings
with unknown probabilities.
\citet{yamin2026validating} elicit probability estimates from LLMs in decision
 tasks and test whether these beliefs satisfy properties required of a
rational decision-maker, including decision-sufficiency (i.e. that stated beliefs
fully explain the model's choices).
They find violations, suggesting that
LLM-reported probabilities do not form a fully coherent basis for the decisions
the models actually make.  This is suggestive of the kind of joint incoherence
that SARSEU is designed to detect, though their framework (random utility
models with elicited beliefs) differs from the revealed-preference approach
taken here.

\paragraph{Extensions: alternative axiom systems}

Full SEU may be too narrow a rationality standard.  In particular, SEU rules out state-dependent 
utility $u_s(\cdot)$, as could arise if the agent anticipates higher needs in some states
than others.  It also rules out ambiguity aversion, the tendency to prefer known risks over unknown ones
\citep{knight1921}, which some might view as desirable for a system facing genuine
uncertainty.

Fortunately, existing results suggest paths forward in both cases.
\citet{echenique2015savage} provide a generalization of their results to settings with state-dependent
utility, which again gives if and only if conditions for a finite dataset to be compatible with
subjective expected utility maximization with utility $u_s(\cdot)$.
Similarly, representation theorems are available for axiom systems which allow
ambiguity aversion.
\citet{gilboa1989maxmin}  obtain a maxmin expected-utility representation: the agent
maximizes expected utility under the worst-case probability measure from a
convex set of priors.  \citet{mmr2006variational} characterize variational
preferences, a broader class that includes maxmin EU as a special case. 
These, however, are again axiomatizations stated on
rich domains of choices.  The finite-data tests that the present approach
requires were developed more recently, in the GRID framework of
\citet{pqr2020}; see \citet{echenique2020are} for a survey.
\citet{dembo2026ellsberg} implement tests of this kind on human
portfolio-choice experiments, testing subjective expected utility and its
ambiguity-averse alternatives.

\section{Implementation Considerations}\label{sec:implementation}

\paragraph{The suggested procedure}

The procedure for each of the three instantiations consists of three steps:
(1) generate synthetic choice problems (collections of
events for de Finetti, budget sets for Afriat, portfolio allocation problems
for Echenique-Saito); (2) elicit responses from the model; (3) compute the
penalty (the Dutch book magnitude $L(p)$, or $1 - \text{CCEI}$, or
$1 - \text{E}$). 
The resulting penalty can then be reported directly  as
an evaluation metric, or
 used in the same way as other
constraint-based penalties.

\paragraph{Practical considerations}

The choice problems used for axiom testing need not be sampled in any specific way (save for the conditions needed to apply Lemma \ref{lem:generic}).  
An adversarial approach to problem generation seems potentially appealing, with 
a second model proposing new choice problems based on previous performance (including the
output from previous penalty computations).

An important practical concern is that the same formal choice problem,
described in different words, or even the same words in a different order, 
may elicit different responses from the model.
In a controlled training setup, this is directly testable.  Generate
the formal problem first, then produce multiple linguistic descriptions of it
and include the model's responses across multiple descriptions.
Axiom violations arising from such paraphrases ``count,'' and will be penalized by the approaches suggested above.

\paragraph{Computational costs}

The Dutch book program
is a single linear program, whose cost is polynomial in the number of atoms and events,
while the CCEI and E are computed by binary search, with each step
requiring a GARP check for the former and a linear program for the latter.  The Dutch book penalty is continuous in the model's
stated probabilities, and $1 - \text{CCEI}$ and $1 - \text{E}$ are continuous in
the model's (interior) allocations at fixed elicitation prices.

\section{Relation to Existing Approaches}\label{sec:related}

The approach proposed here complements, rather than replaces, existing
methods.

\paragraph{RLHF and preference structure.}
Current training methods including reinforcement learning
from human feedback \citep[RLHF; e.g.][]{ouyang2022rlhf} and direct preference
optimization \citep[DPO; e.g.][]{rafailov2023dpo} extract training signals from feedback.
 \citet{ge2024axioms} provide an axiomatic critique of RLHF and propose an alternative, 
axiomatically grounded approach.  All of these methods differ fundamentally from the 
 present paper in that their focus is on generating reward signals for model training, 
rather than enforcing coherence on the model outputs.  Since coherence of model outputs
is in important respects value-neutral, the approach discussed in the present paper is not a substitute
for such evaluation criteria or reward signals.

\paragraph{Calibration and proper scoring rules.}
Calibration methods \citep{guo2017calibration} check whether predicted
probabilities match empirical frequencies, requiring ground truth outcomes.
De Finetti coherence checks internal consistency without ground truth.  The
two are complementary: a model can be well-calibrated on many events but incoherent, or
coherent but poorly calibrated.  
Similarly, a strictly proper scoring rule incentivizes truthful reporting of forecast distributions \citep{gneiting2007jasa}.  
Evaluating proper scoring rules again requires external outcomes data and is
complementary to the coherence approach.

\paragraph{The diagnostic  and correction literature.}
Many authors including \citet{betz2023neural}, \citet{chen2023pnas}, \citet{hagendorff2023llm}, \citet{zhu2024incoherent}, \citet{chadwick2025dutch}, \cite{wen2025specialization}, \citet{paleka2025consistency}, and \citet{qiu2026bayesian}
have documented LLM rationality violations.  As discussed above, some of these papers also discuss how such violations may be reduced, for instance through additional training steps as in \citet{betz2023neural} and \citet{qiu2026bayesian}, or through post-processing of model outputs as in \citet{chadwick2025dutch}.
Relative to these approaches, the approach in this note is motivated by the ``if and only if''
guarantees provided by the representation theorems, which imply that if particular penalty terms could be driven to zero
then model behavior would necessarily be fully rationalizable by a prior, a utility function, or both.
The same guarantees matter for measurement: benchmarks in this
literature examine selected axioms or properties, whereas the checks used
here are exact, in that passing exhausts the implications of the maintained
rationality standard for the elicited data.

\paragraph{Consistency as a training objective.}
\citet{pres2026consistency} argue that many desiderata for language models
are naturally expressed as consistency requirements linking responses across
related inputs, and propose optimizing such cross-input consistency
functions directly, through hard constraints, soft penalties, or posterior
regularization.  The penalties considered here are instances of that
program.  The representation theorems sharpen the recommendations of a 
 generic consistency
framework by identifying which consistency conditions to impose and
guaranteeing that driving violations to zero yields behavior rationalizable by
a well-defined objective.

\paragraph{Utility engineering.}
The utility engineering agenda of \citet{mazeika2025utility} concerns
measuring and controlling the objectives implicit in LLM behavior, motivated by the finding
that larger models develop increasingly coherent value systems, but these emergent objectives may be
misaligned.
Subsequent stress tests find substantial residual incoherence 
\citep{ajayi2026incoherent}, so the strength of the
emergence finding remains contested.
This agenda appears highly
complementary to the approach I suggest: utility engineering appears most applicable in settings where 
 the model acts coherently
enough for a rationalizing objective to exist, which is what the axiom-based
approach proposed here examines.
Conditional on passing the rationality checks proposed here, one could go further
and examine the set of priors and/or utility functions consistent with observed choices,
which are natural inputs for a  utility engineering exercise.  In this spirit, \citet{yamin2026steering}, in follow-up work to
\citet{yamin2026validating}, estimate utility functions based on LLM choices
in diagnostic tasks and study how to steer models toward user-specified
objectives.

\section{Limitations and Conclusion}\label{sec:conclusion}

It is important to be clear about some limitations of the approach proposed here.

\begin{enumerate}

\item Coherence is not enough.  A model that satisfies all the axioms of
subjective expected utility has a well-defined probability measure and utility
function, but these could be arbitrary.\footnote{More broadly, some authors \citep[e.g.][]{tan2025beyond} dispute whether rational choice frameworks of the sort presumed here are an appropriate foundation for LLM behavior in the first place.}  The role of the penalties proposed
here is measurement and regularization, not a standalone objective.

\item The choice of axiom system matters.  Full subjective expected utility
rules out ambiguity aversion and other departures that may be reasonable or
desirable.  Weaker axiom systems (maxmin EU, variational preferences) impose
weaker rationality conditions and permit a broader range of behavior.

\item The joint belief-preference test of \cref{sec:seu} applies to monetary
payoffs (one good per state).  Extending to richer consequence spaces, for
example bundles of goods, is a potentially useful
direction for future work.

\end{enumerate}

Classic results from decision theory characterize, in precise and testable terms, what it
means for behavior to be rational.  These results take the form of axioms
checkable from behavior alone, with representation theorems guaranteeing that
compliance implies the existence of a rationalizing objective.  
This structure, particularly combined with results in the revealed preference tradition,
 appears well-suited to LLM
evaluation and training, where behavior can be observed at scale and tested systematically.
The connection seems potentially fruitful in both directions: decision theory provides
a natural collection of formal tools for evaluation and regularization, while
the AI model training context provides a potentially important setting for decision theorists interested in revisiting
and extending this toolkit.
 
 \appendix
\section*{Appendix: Proofs}

\begin{proof}[Proof of \cref{lem:generic}]
Fix two distinct observations $t$ and $s$.  Under the stated independence
assumptions, $p_t$ is independent of $(x_s,w_t)$: the choice $x_s$ depends only
on $(p_s,w_s)$, the price vectors are independent across observations, and $(w_1,...,w_T)$ is
 independent of the prices.  Moreover,
budget exhaustion and $w_s>0$ imply $x_s\neq 0$.  Conditional on $(x_s,w_t)$,
the event $ p_t\cdot x_s=w_t$
confines $p_t$ to the proper affine hyperplane
$\{q\in\mathbb R^K:q\cdot x_s=w_t\}$.  Because $p_t$ has a density, this event
has conditional, and hence unconditional, probability zero.  Taking a union
over the finitely many ordered pairs $(t,s)$ with $t\neq s$ shows that, with
probability one, $ p_t\cdot x_s\neq w_t$ for every $t\neq s.$

Work on this no-ties event.  If the data satisfy GARP, then they satisfy
$1$-GARP, so $\text{CCEI}=1$.  Conversely, suppose that GARP fails.  After
removing repetitions from a violating revealed-preference chain, there are
 distinct observations $t_1,\ldots,t_m$, with $m\geq 2$ and
$t_{m+1}=t_1$, such that
\[
    p_{t_i}\cdot x_{t_{i+1}}\leq p_{t_i}\cdot x_{t_i}=w_{t_i}
    \qquad (i=1,\ldots,m),
\]
and at least one of these inequalities is strict.  Since every comparison in
this cycle is between distinct observations, the no-ties property makes all of
them strict.  Define
\[
    \rho
    =\max_{i=1,\ldots,m}
      \frac{p_{t_i}\cdot x_{t_{i+1}}}{p_{t_i}\cdot x_{t_i}}.
\]
Then $\rho<1$.  For every $e\in(\rho,1]$,
\[
    p_{t_i}\cdot x_{t_{i+1}}
    < e\,p_{t_i}\cdot x_{t_i}
    \qquad (i=1,\ldots,m),
\]
so the same cycle is a cycle of strict direct $e$-revealed-preference
relations.  It therefore violates $e$-GARP.  No $e>\rho$ can enter the set whose
supremum defines the CCEI, and hence $\text{CCEI}\leq \rho<1.$
Thus, on the no-ties event, $\text{CCEI}=1$ if and only if the data satisfy
GARP.
\end{proof}

\begin{proof}[Proof of \cref{lem:E}]
Let $\mathcal C=\{(t,s):t=1,\ldots,T,\ s\in S\}$ be the set of
observation--state cells, and let $\mathcal A=\{(a,b)\in\mathcal C^2:a\neq b\}$
be the set of ordered pairs of distinct cells.  For a cell $a=(t,s)$, write
$x_a=x_t(s)$ and $p_a=p_t(s)$.  A finite sequence of comparison pairs can be
encoded by a vector $n=(n_{ab})_{(a,b)\in\mathcal A}\in\mathbb Z_+^{\mathcal A}$,
where $n_{ab}$ records the multiplicity of the ordered pair $(a,b)$.  Conditions
(ii) and (iii) of SARSEU are equivalent to $Gn=0$
for an integer matrix $G$ whose rows impose the state- and
observation-balance restrictions (see the proof of the SARSEU result in \citealt{echenique2015savage}).  If $\delta_{ab}=\log p_a-\log p_b,$
then the logarithm of the price product in \eqref{eq: cycles} is
$\delta\cdot n$.  Consequently, a sequence violates the price-product
restriction exactly when $\delta\cdot n>0$.

For $e\in[0,1]$, call $(a,b)$ \emph{licensed} at $e$ when
$e \cdot x_a>x_b$, and let $L(e)\subseteq\mathcal A$ be the set of licensed pairs.
The data violate $e$-SARSEU if and only if there is a nonzero vector
$n\in\mathbb Z_+^{\mathcal A}$ satisfying
\begin{equation}\label{eq:e-SARSEU-integer-witness}
    Gn=0,
    \qquad
    \delta\cdot n>0,
    \qquad
    \operatorname{supp}(n)\subseteq L(e).
\end{equation}
Indeed, any sequence gives such a multiplicity vector, and any such integer
vector can be expanded into a sequence by listing each pair $(a,b)$ exactly
$n_{ab}$ times.

Define the always-feasible rational polytope (i.e. a polytope defined by rational constraints)
\[
    P_0
    =\left\{z\in\mathbb R_+^{\mathcal A}:
      Gz=0,\quad \sum_{(a,b)\in\mathcal A}z_{ab}\leq 1\right\},
\]
and, for each $e$, its face
\[
    P_0(e)
    =\left\{z\in P_0:
      z_{ab}=0\ \text{for every }(a,b)\notin L(e)\right\}.
\]
Because $0\in P_0(e)$, the linear program
\begin{equation}\label{eq:e-SARSEU-LP}
    \Phi(e)=\max\{\delta\cdot z:z\in P_0(e)\}
\end{equation}
is always feasible and has an attained optimum.  I claim that the data violate
$e$-SARSEU if and only if $\Phi(e)>0$.  If $n$ satisfies
\eqref{eq:e-SARSEU-integer-witness}, then
$z=n/\sum_{ab}n_{ab}$ belongs to $P_0(e)$ and has positive objective.  Conversely,
if $\Phi(e)>0$, some vertex $v$ of the face $P_0(e)$ has
$\delta\cdot v>0$.  Since $P_0$ is a rational polytope and a vertex of a face is
a vertex of the polytope, $v$ is rational.  Multiplying $v$ by a common
denominator produces a nonzero integer vector satisfying
\eqref{eq:e-SARSEU-integer-witness}.  This proves the claim and gives the LP
check in part~(ii), with variables outside the licensed comparison set deleted.

Let $V$ be the finite vertex set of $P_0$ and define $V_+=\{v\in V:\delta\cdot v>0\}.$
For $(a,b)\in\mathcal A$, set
\[
    \rho_{ab}
    =\begin{cases}
       x_b/x_a, & x_a>0,\\
       +\infty, & x_a=0.
     \end{cases}
\]
Because all payoffs are nonnegative, $(a,b)$ is licensed at $e$ exactly when
$e>\rho_{ab}$.  For each $v\in V_+$, define $\tau_v=\max\{\rho_{ab}:v_{ab}>0\}.$
The LP characterization above implies
\[
    \text{the data violate $e$-SARSEU}
    \quad\Longleftrightarrow\quad
    e>\tau_v\ \text{for some }v\in V_+.
\]
For the reverse implication, such a vertex is itself feasible in $P_0(e)$ and
has positive objective.  For the forward implication, a positive optimum of
\eqref{eq:e-SARSEU-LP} is attained at a vertex of the face $P_0(e)$, hence at a
vertex in $V_+$ whose support is licensed.

It follows that the set of passing values is the closed interval $[0,\text{E}]$, where
\begin{equation}\label{eq:E-vertex-formula}
    \text{E}
    =\min\left\{1,\ \min_{v\in V_+}\tau_v\right\},
\end{equation}
with the inner minimum interpreted as $+\infty$ when $V_+$ is empty.  This proves
that the supremum defining $\text{E}$ is attained.  At $e=1$, the licensed pairs are
exactly those with $x_a>x_b$, so $1$-SARSEU is SARSEU.  Hence
$\text{E}=1$ if and only if the data satisfy SARSEU, proving part~(i).

If $\text{E}<1$, the finite minimum in \eqref{eq:E-vertex-formula} is attained by some
$v\in V_+$, and $\tau_v$, being a finite maximum, equals one of the ratios
$\rho_{ab}=x_b/x_a$ on the support of $v$.  That ratio is strictly below one.
Thus E is either $1$ or one of the candidate payoff ratios in the statement.
Moreover, the passing property is monotone in $e$, and for any candidate value
it is decided by the linear program \eqref{eq:e-SARSEU-LP}.  Sorting the finitely
many candidate ratios and applying binary search therefore computes E, proving
part~(ii).

Finally, fix the prices.  Then $\delta$, $P_0$, $V$, and $V_+$ do not vary with
the payoffs.  On the region where every payoff is strictly positive, each ratio
$\rho_{ab}=x_b/x_a$ is continuous.  Each $\tau_v$ is a maximum of finitely many
such continuous functions, and \eqref{eq:E-vertex-formula} expresses E as a
minimum of finitely many continuous functions and the constant function $1$.
Therefore E is continuous on that region, proving part~(iii).
\end{proof}

\bibliographystyle{apalike}
\bibliography{references}

\end{document}